\newtheorem{theorem}{Theorem}[section]
\newtheorem{prop}[theorem]{Proposition}
\theoremstyle{definition}
\newtheorem{definition}[theorem]{Definition}
\theoremstyle{remark}
\newtheorem{remark}[theorem]{Remark}
\theoremstyle{assumption}
\theoremstyle{fact}
\newtheorem{fact}[theorem]{Fact}
\theoremstyle{claim}
\theoremstyle{cor}
\numberwithin{equation}{section}
\newcommand{\norm}[1]{\left\lVert{#1}\right\rVert} 
\newcommand{\abs}[1]{\left\lvert{#1}\right\rvert} 
\newcommand{\R}{\mathbb{R}} 
\renewcommand{\P}{\mathcal{P}}  
\newcommand{\ls}{\limsup}   
\newcommand{\li}{\liminf}   
\newcommand{\ol}{\overline}
\newcommand{\Ntsigma}{N_{t}^{\sigma}} 
\newcommand{\swinstants}{0 \teL \tau_{0}<\tau_{1}<\tau_{2}<\cdots<\tau_{\Ntsigma}} 
\newcommand{\muprod}{\prod_{i=0}^{\Ntsigma}\mu_{\sigma(\tau_{i})\sigma(\tau_{i+1})}} 
\newcommand{\dift}{t-\tau_{\Ntsigma}} 
\newcommand{\diftau}{\tau_{i+1}-\tau_{i}} 
\newcommand{\tausum}{\sum_{i=0}^{\Ntsigma-1}\lambda_{\sigma(\tau_{i})}(\diftau)} 
\newcommand{\ktol}{\#{\{k\rightarrow\ell\}_{\Ntsigma}}} 
\newcommand{\indicator}{\mathds{1}_{\{j\}}(\sigma(\tau_{i}))} 
\newcommand{\holdt}{S_{i+1}}   
\newcommand{\nuh}{\frac{\Ntsigma}{h(t)}}    
\newcommand{\lra}{\longrightarrow}
\newcommand{\Let}{\coloneqq}
\newcommand{\teL}{\eqqcolon}
\newcommand{\eps}{\varepsilon}
\newcommand{\drv}{\mathrm d}
\newcommand{\transp}{^{\top}}
\newcommand{\pmat}[1]{\begin{pmatrix}#1\end{pmatrix}}
\newcommand{\inprod}[2]{\left\langle #1 , #2 \right\rangle}
\renewcommand{\geq}{\geqslant}
\renewcommand{\ge}{\geqslant}
\renewcommand{\leq}{\leqslant}
\renewcommand{\le}{\leqslant}
\newcommand{\astablesum}{\sum_{j\in \P_{AS}}\lambda_{j}\sum_{i:\sigma(\tau_{i})=j}\holdt} 
\newcommand{\mstablesum}{\sum_{j\in \P_{MS}}\lambda_{j}\sum_{i:\sigma(\tau_{i})=j}\holdt} 
\newcommand{\ustablesum}{\sum_{j\in \P_{U}}\lambda_{j}\sum_{i:\sigma(\tau_{i})=j}\holdt}  
\newcommand{\astablesummag}{\sum_{j\in \P_{AS}}\abs{\lambda_{j}}\sum_{i:\sigma(\tau_{i})=j}\holdt} 
\newcommand{\ustablesummag}{\sum_{j\in \P_{U}}\abs{\lambda_{j}}\sum_{i:\sigma(\tau_{i})=j}\holdt} 
\newcommand{\ustablesummagfreq}{\sum_{j\in \P_{U}}\abs{\lambda_{j}}\sum_{i:\sigma(\tau_{i})=j}\frac{\holdt}{h(t)}} 
\newcommand{\astablesummagfreq}{\sum_{j\in \P_{AS}}\abs{\lambda_{j}}\sum_{i:\sigma(\tau_{i})=j}\frac{\holdt}{h(t)}} 
\newcommand{\ustablesummagfreqnew}{\sum_{j\in \P_{U}}\abs{\lambda_{j}}\eta_{h}(j,t)} 
\newcommand{\astablesummagfreqnew}{\sum_{j\in \P_{AS}}\abs{\lambda_{j}}\eta_{h}(j,t)} 
\newcommand{\ustablesummagfreqls}{\sum_{j\in \P_{U}}\abs{\lambda_{j}}\ls_{t\rightarrow+\infty}\eta_{h}(j,t)} 
\newcommand{\astablesummagfreqli}{\sum_{j\in \P_{AS}}\abs{\lambda_{j}}\li_{t\rightarrow+\infty}\eta_{h}(j,t)} 
\title[Stabilizing switching signals for switched linear systems]{Stabilizing switching signals for\\switched linear systems}
\thanks{The authors thank Daniel Liberzon for helpful discussions and pointers to several relevant articles.}
\author[A.\ Kundu]{Atreyee Kundu}
\address{Systems \& Control Engineering, IIT Bombay, Mumbai~--~400076, India}
\email[A.\ Kundu]{atreyee@sc.iitb.ac.in}
\author[D.\ Chatterjee]{Debasish Chatterjee}
\email[D.\ Chatterjee]{dchatter@iitb.ac.in}
\urladdr[D.\ Chatterjee]{\url{http://www.sc.iitb.ac.in/~chatterjee}}
\keywords{switched linear systems, asymptotic stability, multiple Lyapunov functions}
\date{\today}
\begin{document}
    \maketitle

    \begin{abstract}
        This article deals with stability of continuous-time switched linear systems under constrained switching. Given a family of linear systems, possibly containing unstable dynamics, we characterize a new class of switching signals under which the switched linear system generated by it and the family of systems is globally asymptotically stable. Our characterization of such stabilizing switching signals involves the asymptotic frequency of switching, the asymptotic fraction of activation of the constituent systems, and the asymptotic densities of admissible transitions among them. Our techniques employ multiple Lyapunov-like functions, and extend preceding results both in scope and applicability.
    \end{abstract}

    \section{Introduction}
	\label{s:intro}
        A \emph{switched system} consists of a family of systems, and a \emph{switching signal} that selects an \emph{active system} from the family at every instant of time \cite[\S1.1.2]{Liberzon}. Switched systems have been employed to model dynamical behavior that are subject to known or unknown abrupt parameter variations \cite{Sun}, and naturally arise in a multitude of application areas such as networked systems, quantization, variable structure systems, etc; see e.g., \cite{Heemels_survey, Heemels_switched, Heemels_lecture, Shorten_review, Antsaklis_survey} and the references therein.

		This article is concerned with stability of switched linear systems. Stability of switched systems has been broadly classified into two categories---\emph{stability under arbitrary switching}, which is concerned with finding conditions that guarantee asymptotic stability of a switched system under all possible switching signals \cite[Chapter 2]{Liberzon}, and \emph{stability under constrained switching}, which is concerned with identifying classes of switching signals for which the switched system is asymptotically stable \cite[Chapter 3]{Liberzon}. Here we concentrate on stability under constrained switching.

        Research in the direction of stability under constrained switching has primarily focussed on the concept of slow switching among asymptotically stable systems. The key idea behind stability under \emph{slow switching} \cite[$\S$3.2]{Liberzon} is that if all the constituent systems are stable and the switching is sufficiently slow, then the ``energy injected due to switching'' gets sufficient time for ``dissipation'' due to the asymptotic stability of the individual systems. This idea is captured well by the concepts of dwell time \cite[$\S$3.2.1]{Liberzon} and average dwell time \cite[$\S$3.2.2]{Liberzon} switching. For instance, a switched linear system is asymptotically stable if all the systems in the family are asymptotically stable, and the dwell time of the switching signal is sufficiently large \cite{Antsaklis_survey}. A similar assertion holds under the more general class of average dwell time switching signals \cite{HespanhaMorse, Liberzon}, where the number of switches on any time interval can grow at most as an affine function of the length of the interval \cite[Chapter 3]{Liberzon}. Stability analysis of switched systems under these switching signals is now considered classical \cite{Antsaklis_survey, Shorten_review}. Several other classes of stabilizing switching signals, e.g., those obeying weak and persistent dwell time conditions have been proposed in \cite{Hespanha_Lasalles, Goebel}. Asymptotic properties of switched systems are discussed in \cite{linear_convergence} under weak, persistent, and strong dwell time defined on each constituent system and not as a property of the switching signal, unlike their earlier counterparts. In the presence of unstable systems in the family, however, most of the above-mentioned results do not carry over in a straightforward fashion. Indeed, slow switching alone cannot guarantee stability of the switched system---additional conditions are essential to ensure that the switched system does not spend too much time in the unstable components \cite{Antsaklis_survey}. It is thus no longer sufficient to characterize switching signals with the frequency of switching.

        In the light of the aforementioned works, our contributions are as follows:
		\begin{itemize}[label=\(\circ\), leftmargin=*]
			\item We admit switching signals having properties beyond the average dwell time regime. Our switching signals can have a frequency of switching that grows, for example, linearly with time; in contrast, average dwell time switching restricts the switching frequency to be \(O(1)\).
			\item We characterize stabilizing switching signals by explicitly involving the following asymptotic properties of the switching signal: the frequency of switching, the fraction of activity of the constituent systems, and the ``density'' of the admissible transitions among them. Our characterizations do not involve point-wise bounds on the number of switches. This is in contrast to average dwell time switching, where the characterization requires a uniform bound on the number of switches on \emph{every} interval of time of equal length.
			\item We address global asymptotic stability of a switched linear system containing unstable components. While this is not the first time that stability of switched systems containing unstable dynamics in the family has been considered (see e.g., \cite{Zhai01, Liberzon_IOSS},) we contend that this is the first instance where stability of a switched linear system is treated in the presence of unstable dynamics in the family, \emph{and} the conditions for stability involve only asymptotic properties of the switching signal.
		\end{itemize}

         The article unfolds as follows: In \S\ref{s:prelims} we formalize the setting for our result. Here we introduce our chief analytical apparatus---multiple Lyapunov-like functions \cite[\S3.1]{Liberzon} and a directed graph on the set of constituent systems induced by the switching signal to represent the set of admissible transitions. We identify and derive key properties of the family of systems and asymptotic properties of the switching signals that we need. Our main result is stated in \S\ref{s:mainres}, where we elaborate, in a sequence of remarks, on its various aspects and its relationship to preceding works. We illustrate our main result with the aid of a numerical example in \S\ref{s:example}, and conclude in \S\ref{s:conclusion} with a brief summary of future directions. The proofs of all the claims are collected in Appendices \ref{ap1} and \ref{ap2}.

    \section{Preliminaries}
	\label{s:prelims}
        We consider a family of continuous-time linear autonomous dynamical systems
        \begin{equation}
        \label{e:family}
            \dot{x}(t) = A_{i}x(t),\quad A_{i}\in\R^{{d}\times{d}},\quad i\in\P, \quad t\geq0,
        \end{equation}
        where $x(t)\in\R^{d}$ is the vector of states at time $t$, and $\P = \{1,2,...,N\}$ denotes a finite index set. We assume that for each $i\in\P$, the matrix $A_{i}$ has full rank; consequently, $0\in\R^{d}$ is the unique equilibrium point for each system in \eqref{e:family}. Let $\sigma:[0,+\infty[\:\lra\P$ be a piecewise constant function that specifies, at each time $t$, the system $A_{\sigma(t)}$ from the family \eqref{e:family}, that is active, at $t$. This function $\sigma$ is called the \emph{switching signal}, and by convention, $\sigma$ is assumed to be continuous from right and having limits from the left everywhere.
        The switched linear system \cite{Liberzon} generated by $\sigma$ and the family of systems \eqref{e:family} is
        \begin{equation}
        \label{e:swsys}
            \dot{x}(t) = A_{\sigma(t)}x(t),\quad x(0) = x_{0} \text{(given)}, \quad t\geq0.
        \end{equation}
        Let $0\teL\tau_{0}<\tau_{1}<\tau_{2}<\cdots$ denote the points of discontinuity of $\sigma$; these are the switching instants. For $t>0$, let $\Ntsigma$ denote the number of switches on $]0,t]$. For us a switching signal is admissible if it is piecewise constant as a function from $[0,+\infty[$ into $\P$. The solution $(x(t))_{t\geq0}$ to the switched system \eqref{e:swsys} corresponding to an admissible switching signal $\sigma$ is the map $x:[0,+\infty[\:\lra\R^{d}$ defined by
        \begin{align*}
       \label{e:swsol}
            x(t) = &\exp({A_{\sigma(\tau_{\Ntsigma})}(t-\tau_{\Ntsigma})})\exp({A_{\sigma(\tau_{{\Ntsigma}-1})}(\tau_{\Ntsigma}-\tau_{{\Ntsigma}-1})})\cdots\nonumber\\ & \exp({A_\sigma({\tau_1})(\tau_2-\tau_1)})\exp({A_\sigma({\tau_0})(\tau_1-\tau_0)})x_{0}, \quad t\geq0,
        \end{align*}
        where we have suppressed the dependence of $x$ on $\sigma$.

        Let the admissible transitions among the systems in the family \eqref{e:family} be generated by walks on a directed graph $(\P, E(\P))$.\footnote{A directed graph representation of a switching signal has appeared before in, e.g., \cite{Mancilladigraph}.} Recall that a directed graph is a set of nodes connected by edges, where each edge has a direction associated to it. The set of vertices of the graph is the set of indices of the systems, i.e., $\P$; the transitions from one system to another that are effected by walks $\sigma$ at the switching instants $\tau_{0},\tau_{1},\cdots$ are represented by directed edges among the vertices in $\P$. For $t>0$ let $\sigma|_{[0,t]}$ denote the restriction of $\sigma$ to $[0,t]$. The set of edges $E(\P)$ consists of the transitions $(\sigma(\tau_{0}),\sigma(\tau_{1})),\cdots,(\sigma(\tau_{\Ntsigma-1}),\sigma(\tau_{\Ntsigma}))$ corresponding to $\sigma|_{[0,t]}$. Note that for $\sigma|_{[0,t]}$, we have a finite number of transitions $\sigma(\tau_{0}) \rightarrow \sigma(\tau_{1}) \rightarrow \cdots \rightarrow \sigma(\tau_{\Ntsigma})$ due to the assumption that $\sigma$ is a piecewise constant function. For instance, let $\P = \{1,2,3,4\}$. Let $\sigma$ be an admissible switching signal such that the transitions $1\rightarrow 2$, $1\rightarrow 3$, $2\rightarrow 3$, and $3\rightarrow 1$ are admissible. The corresponding digraph representation is as follows:

        \begin{center}
        \begin{tikzpicture}[every path/.style={>=latex},every node/.style={draw,circle}]
            \node            (a) at (0,0)  { 1 };
            \node            (b) at (2,0)  { 2 };
            \node            (c) at (2,-2) { 3 };
            \node            (d) at (0,-2) { 4 };

            \draw[->] (a) edge (b);
            \draw[->] (b) edge (c);
            \draw[->] (a) edge (c);
            \draw[<-] (a) edge [bend right] (c);
        \end{tikzpicture}
        \end{center}

        We focus on global asymptotic stability of \eqref{e:swsys}, defined as:
        \begin{definition} \label{d:GAS}
            The switched system \eqref{e:swsys} is \emph{globally asymptotically stable} (GAS) for a given switching signal $\sigma$ if \eqref{e:swsys} is
            \begin{itemize}[label=\(\circ\), leftmargin=*]
                \item Lyapunov stable, and
                \item uniformly globally asymptotically convergent, i.e., for all \(r, \eps > 0\) there exists \(T(r, \eps) > 0\) such that \(\norm{x(t)} < \eps\) for all \(t > T(r, \eps)\) whenever $\norm{x_{0}} < r$.
            \end{itemize}
        \end{definition}
		In other words, \eqref{e:swsys} is GAS for a given switching signal \(\sigma\) if there exists a class-\(\mathcal KL\) function \(\beta_\sigma\) such that \(\norm{x(t)} \le \beta_\sigma(\norm{x_0}, t)\) for all \(x_0\in\R^d\) and \(t\geq 0\).

        \begin{fact}
        \label{fact:gas}
            For a given switching signal \(\sigma\), global asymptotic convergence of \eqref{e:swsys} is sufficient for GAS of \eqref{e:swsys}. Consequently, Lyapunov stability is not essential as a part of the definition of GAS in Definition \ref{d:GAS}.
        \end{fact}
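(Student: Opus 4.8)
The plan is to exploit the linearity of \eqref{e:swsys}. For a fixed admissible \(\sigma\) the solution in \eqref{e:swsol} is \(x(t) = \Phi_\sigma(t)x_{0}\), where \(\Phi_\sigma(t)\) is the product of matrix exponentials appearing there; this map is linear in \(x_{0}\). The point of the Fact is that, because of this linearity, the trajectory-wise convergence in the second bullet of Definition \ref{d:GAS} can be upgraded to a uniform-in-time bound on the operator norm \(\norm{\Phi_\sigma(t)}\), and such a bound is exactly what Lyapunov stability demands. So I would reduce everything to showing \(M \coloneqq \sup_{t\ge 0}\norm{\Phi_\sigma(t)} < \infty\), treating a compact initial interval and the large-time regime separately.

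First I would bound \(\Phi_\sigma\) on compact time intervals. Since \(\sigma\) is piecewise constant it has only finitely many switches on any \([0,T]\), so \(\Phi_\sigma\) restricted to \([0,T]\) is a finite product of the continuous maps \(t\mapsto\exp(A_{i}t)\), hence continuous; consequently \(\sup_{t\in[0,T]}\norm{\Phi_\sigma(t)} =: M_{T} < \infty\).

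Next I would control the large-time regime using the convergence hypothesis. Fixing \(r=\eps=1\) yields \(T_{0}\coloneqq T(1,1)\) with \(\norm{\Phi_\sigma(t)x_{0}} < 1\) whenever \(\norm{x_{0}} < 1\) and \(t > T_{0}\). By homogeneity, for any unit vector \(v\) and any \(s\in\,]0,1[\) the vector \(sv\) satisfies \(\norm{sv}<1\), so \(s\norm{\Phi_\sigma(t)v} < 1\); letting \(s\uparrow 1\) gives \(\norm{\Phi_\sigma(t)v}\le 1\), and taking the supremum over unit \(v\) yields \(\norm{\Phi_\sigma(t)}\le 1\) for all \(t > T_{0}\). (Equivalently, applying convergence to the standard basis vectors shows each column of \(\Phi_\sigma(t)\) tends to \(0\), whence \(\Phi_\sigma(t)\to 0\) in the finite-dimensional space \(\R^{d\times d}\).) Combining the two regimes gives \(M \le \max\{M_{T_{0}},1\} < \infty\).

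Lyapunov stability then follows at once: given \(\eps>0\), set \(\delta\coloneqq \eps/M\); then \(\norm{x_{0}}<\delta\) forces \(\norm{x(t)} = \norm{\Phi_\sigma(t)x_{0}}\le M\norm{x_{0}} < \eps\) for all \(t\ge 0\). Together with the assumed global asymptotic convergence this establishes GAS in the sense of Definition \ref{d:GAS}, and the class-\(\mathcal{KL}\) bound \(\beta_\sigma\) can be assembled from \(M\) (stability part) and the gauge \(T(r,\eps)\) (decay part) in the standard way. I expect the only delicate point to be the passage from trajectory-wise convergence to the uniform operator-norm bound on \(\Phi_\sigma\); it is precisely linearity (homogeneity) that makes this passage free, and that is why Lyapunov stability is redundant in the linear setting.
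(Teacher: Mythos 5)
Your proof is correct, and it follows the same overall decomposition as the paper's (bound the solution on a compact initial interval, then invoke the convergence hypothesis for the tail), but the mechanism for the finite-time bound is genuinely different. The paper does not use the transition matrix at all: it observes that the family \eqref{e:family} is uniformly Lipschitz with constant \(L = \max_{i\in\P}\norm{A_i}\), derives the two-sided Gronwall-type estimate \(\norm{x_0}\exp(-Lt)\le\norm{x(t)}\le\norm{x_0}\exp(Lt)\) via Cauchy--Schwarz, and then picks the explicit \(\delta' = \eps\exp(-LT(1,\eps))\) so that the trajectory stays below \(\eps\) on \([0,T(1,\eps)]\); the tail is handled directly by the convergence property with \(r=1\). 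You instead exploit linearity twice: continuity of \(t\mapsto\Phi_\sigma(t)\) on compacts (valid since piecewise constancy gives finitely many switches there) yields \(M_{T_0}\), and homogeneity upgrades the trajectory-wise convergence to the operator-norm bound \(\norm{\Phi_\sigma(t)}\le 1\) for \(t>T_0\), giving a single uniform constant \(M\) and \(\delta = \eps/M\). Your route is arguably cleaner and isolates the real content (uniform boundedness of the transition matrix), and the global bound \(M\) plugs directly into the class-\(\mathcal{KL}\) estimate; the paper's route is more quantitative (an explicit \(\delta\) in terms of \(L\) and \(T(1,\eps)\)) and, being independent of linearity of the flow map, would survive in a uniformly Lipschitz nonlinear setting where your compactness-plus-homogeneity argument would not.
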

        We provide a short direct proof of Fact \ref{fact:gas} in Appendix \ref{ap1}.

		\subsection{Properties of the family \eqref{e:family}}
        Our study on GAS of the switched system \eqref{e:swsys} requires analyzing the temporal behaviour of Lyapunov-like functions for the individual systems in the family \eqref{e:family} along the corresponding system trajectories. To this end, we need to capture quantitative measures of (in)stability of the systems in \eqref{e:family}, and our primary tool is:
        \begin{fact}
        \label{fact:key}
            For each $i\in\P$, there exists a pair $(P_{i},\lambda_{i})$, where $P_{i}\in\R^{d\times d}$ is a symmetric and positive definite matrix, and
        \begin{itemize}[label=\(\circ\), leftmargin=*]
           \item if $A_{i}$ is asymptotically stable, then $\lambda_{i}>0$;
           \item if $A_{i}$ is marginally stable, then $\lambda_{i}=0$; \footnote{Marginal stable systems are those systems in \eqref{e:family} that are Lyapunov stable but not asymptotically stable.}
           \item if $A_{i}$ is unstable, then $\lambda_{i} < 0$;
        \end{itemize}
        such that, with
        \begin{equation}
        \label{e:Lyaplike}
            \R^{d}\ni\xi\longmapsto V_{i}(\xi)\Let\langle P_{i}\xi, \xi\rangle\in[0,+\infty[,
        \end{equation}
        we have
        \begin{equation}
        \label{e:Lyapprop}
            V_{i}(\gamma_i(t))\leq V_{i}(\gamma_i(0))\exp{(-\lambda_{i}t)}\quad\text{for all}\quad \gamma_i(0)\in\R^{d},\quad t\in[0,+\infty[,
        \end{equation}
        and $\gamma_i(\cdot)$ solves the $i$-th system dynamics in \eqref{e:family}, $i\in\P$.
        \end{fact}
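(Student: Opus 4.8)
The plan is to establish the three cases of Fact~\ref{fact:key} by exhibiting, for each $i\in\P$, an explicit quadratic Lyapunov-like function $V_i$ together with a decay rate $\lambda_i$ whose sign matches the stability type of $A_i$. The key observation is that the differential inequality \eqref{e:Lyapprop} is equivalent, after differentiating $t\mapsto V_i(\gamma_i(t))$ along trajectories of $\dot\gamma_i = A_i\gamma_i$, to the matrix inequality $A_i\transp P_i + P_i A_i \le -\lambda_i P_i$. Indeed, if this inequality holds then $\frac{\drv}{\drv t}V_i(\gamma_i(t)) = \inprod{(A_i\transp P_i + P_i A_i)\gamma_i(t)}{\gamma_i(t)} \le -\lambda_i\inprod{P_i\gamma_i(t)}{\gamma_i(t)} = -\lambda_i V_i(\gamma_i(t))$, and Gr\"onwall's inequality integrates this to exactly \eqref{e:Lyapprop}. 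So the whole statement reduces to producing, in each of the three cases, a symmetric positive definite $P_i$ and a scalar $\lambda_i$ of the prescribed sign satisfying $A_i\transp P_i + P_i A_i \le -\lambda_i P_i$.

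First I would treat the asymptotically stable case. Here all eigenvalues of $A_i$ lie in the open left half-plane, so there exists $\alpha_i > 0$ such that the shifted matrix $A_i + \tfrac{\alpha_i}{2}\mathrm{Id}$ is still Hurwitz (choose $\alpha_i$ smaller than twice the spectral abscissa's absolute value). By the classical Lyapunov theorem applied to this shifted matrix, the equation $(A_i + \tfrac{\alpha_i}{2}\mathrm{Id})\transp P_i + P_i(A_i + \tfrac{\alpha_i}{2}\mathrm{Id}) = -Q_i$ admits a symmetric positive definite solution $P_i$ for any symmetric positive definite $Q_i$; rearranging gives $A_i\transp P_i + P_i A_i = -\alpha_i P_i - Q_i \le -\alpha_i P_i$, so $\lambda_i \Let \alpha_i > 0$ works. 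For the marginally stable case I would set $\lambda_i = 0$ and simply invoke the existence of a (weak) quadratic Lyapunov function: since $A_i$ is Lyapunov stable, there is a symmetric positive definite $P_i$ with $A_i\transp P_i + P_i A_i \le 0$, which is precisely $\lambda_i = 0$.

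For the unstable case the trick is a time-reversal symmetry. Since $A_i$ has full rank, $-A_i$ shares its nonzero eigenvalue structure reflected across the imaginary axis; I would pick $\alpha_i > 0$ small enough that $-A_i + \tfrac{\alpha_i}{2}\mathrm{Id}$ has all eigenvalues with strictly negative real part (equivalently, that $A_i - \tfrac{\alpha_i}{2}\mathrm{Id}$ is anti-Hurwitz), solve the corresponding Lyapunov equation to obtain symmetric positive definite $P_i$ satisfying $A_i\transp P_i + P_i A_i \le \alpha_i P_i$, and set $\lambda_i \Let -\alpha_i < 0$. Collecting the three constructions yields the pair $(P_i,\lambda_i)$ with the claimed sign of $\lambda_i$ in every case, and the Gr\"onwall step above converts the matrix inequality into \eqref{e:Lyapprop}, completing the proof.

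The routine parts are the Lyapunov-equation solvability and the Gr\"onwall integration; the only point requiring care is the choice of the shift $\alpha_i$ in the stable and unstable cases, where I must ensure the shifted matrix genuinely lands in the open left (respectively right) half-plane. This is the main obstacle, but it is resolved cleanly by taking $\alpha_i$ strictly between $0$ and the smallest absolute value of the real parts of the eigenvalues of $A_i$, which is positive precisely because $A_i$ has full rank and the relevant eigenvalues are bounded away from the imaginary axis in the strictly stable and strictly unstable cases.
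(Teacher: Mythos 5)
Your reduction to the matrix inequality $A_i\transp P_i + P_i A_i \le -\lambda_i P_i$ plus Gr\"onwall is sound, and your treatment of the asymptotically stable and marginally stable cases is correct (and essentially equivalent to the paper's, which solves the unshifted Lyapunov equation $A_i\transp P_i + P_i A_i = -Q_i$ and extracts the rate as $\lambda_i = \lambda_{\min}(Q_i)/\lambda_{\max}(P_i)$ rather than shifting the spectrum in advance; both are standard).

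The unstable case, however, has a genuine gap. Your construction requires $-A_i + \tfrac{\alpha_i}{2}I_d$ to be Hurwitz for some $\alpha_i>0$, which forces \emph{every} eigenvalue of $A_i$ to have real part strictly greater than $\alpha_i/2>0$, i.e.\ $A_i$ must be anti-Hurwitz. But an unstable matrix in general has a mixed spectrum: it only needs one eigenvalue in the open right half-plane (or a defective eigenvalue on the imaginary axis). A saddle such as $A_i = \mathrm{diag}(1,-1)$ has full rank and is unstable, yet $-A_i + \tfrac{\alpha_i}{2}I_d$ retains the eigenvalue $1+\tfrac{\alpha_i}{2}>0$ for every $\alpha_i>0$, so the Lyapunov equation you invoke has no symmetric positive definite solution and the argument collapses. (Note that the paper's own numerical example uses exactly such matrices: $A_3$ and $A_4$ have negative determinant, hence one stable and one unstable eigenvalue.) Your closing remark that the eigenvalues are ``bounded away from the imaginary axis in the \ldots strictly unstable case'' because $A_i$ has full rank conflates ``no zero eigenvalue'' with ``no eigenvalue in the closed left half-plane.'' The fix is the one the paper uses: no spectral information is needed at all, since the required bound is only a \emph{growth} estimate. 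Take $P_i = I_d$, so $V_i(z) = \norm{z}^2$; then Cauchy--Schwarz gives $\abs{\tfrac{\drv}{\drv t}V_i(\gamma_i(t))} = \abs{2\gamma_i(t)\transp A_i \gamma_i(t)} \le 2\norm{A_i}V_i(\gamma_i(t))$, and Gr\"onwall yields \eqref{e:Lyapprop} with $\lambda_i \Let -2\norm{A_i} < 0$ for any unstable $A_i$ whatsoever.
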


        See Appendix \ref{ap1} for a short proof.

        The functions defined in \eqref{e:Lyaplike} and satisfying \eqref{e:Lyapprop} will be called Lyapunov-like functions in the sequel. We observe that for all $i,j\in\P$, the respective Lyapunov-like functions are related as follows: there exists $\mu_{ij}>0$ such that
        \begin{equation}
        \label{e:muijprop}
            V_{j}(\xi)\leq\mu_{ij}V_{i}(\xi), \quad\text{for all}\quad \xi\in\R^{d}
        \end{equation}
        whenever the switching from $i$ to $j$ is admissible. Indeed, we have the following \emph{tight} estimate of the numbers $\mu_{ij}$:

        \begin{prop}
        \label{p:muijestimate}
            Let the Lyapunov-like functions be defined as in \eqref{e:Lyaplike} with each $P_i$ symmetric and positive definite, $i\in\P$. Then the smallest constant $\mu_{ij}$ in \eqref{e:muijprop} is given by
        \begin{equation}
        \label{e:muijestimate}
            \mu_{ij} = \lambda_{\max}(P_{j}{P_{i}}^{-1}), \qquad i,j\in\P,
        \end{equation}
        where $\lambda_{\max}$ denotes the maximal eigenvalue.
        \end{prop}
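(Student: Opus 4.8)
The plan is to recast the inequality $V_j(\xi) \le \mu_{ij} V_i(\xi)$ in terms of the quadratic forms defined by $P_i$ and $P_j$, and then reduce it to a generalized eigenvalue problem. First I would rewrite the desired bound \eqref{e:muijprop} using \eqref{e:Lyaplike} as $\inprod{P_j \xi}{\xi} \le \mu \inprod{P_i \xi}{\xi}$ for all $\xi \in \R^d$, i.e. $\inprod{(\mu P_i - P_j)\xi}{\xi} \ge 0$ for all $\xi$, which holds precisely when $\mu P_i - P_j \succeq 0$. The smallest admissible $\mu$ is therefore the least value making this matrix positive semidefinite, and I want to show this value equals $\lambda_{\max}(P_j P_i^{-1})$.

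The key technical step is a change of variables that symmetrizes the problem. Since $P_i$ is symmetric and positive definite, it admits a symmetric positive definite square root $P_i^{1/2}$, which is invertible. Substituting $\xi = P_i^{-1/2}\eta$ turns the condition $\inprod{(\mu P_i - P_j)\xi}{\xi} \ge 0$ into $\inprod{(\mu I - P_i^{-1/2} P_j P_i^{-1/2})\eta}{\eta} \ge 0$ for all $\eta \in \R^d$, because $\inprod{P_i \xi}{\xi} = \inprod{\eta}{\eta}$ and $\inprod{P_j \xi}{\xi} = \inprod{P_i^{-1/2} P_j P_i^{-1/2}\eta}{\eta}$. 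The matrix $M \Let P_i^{-1/2} P_j P_i^{-1/2}$ is symmetric (being a congruence of the symmetric $P_j$), so the smallest $\mu$ for which $\mu I - M \succeq 0$ is exactly $\lambda_{\max}(M)$, the largest eigenvalue of $M$.

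It then remains to identify $\lambda_{\max}(M)$ with $\lambda_{\max}(P_j P_i^{-1})$. This follows from a similarity argument: $M = P_i^{-1/2} P_j P_i^{-1/2} = P_i^{1/2}\bigl(P_i^{-1} P_j\bigr) P_i^{-1/2}$, so $M$ is similar to $P_i^{-1} P_j$, and hence shares its eigenvalues. Likewise $P_i^{-1} P_j$ and $P_j P_i^{-1}$ are similar (conjugate by $P_j$ or by $P_i^{-1}$, or one may note that $AB$ and $BA$ always have the same eigenvalues), so all three matrices have identical spectra. Combining these, $\lambda_{\max}(M) = \lambda_{\max}(P_j P_i^{-1})$, giving the claimed formula \eqref{e:muijestimate}.

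I do not anticipate a serious obstacle here; the statement is a clean application of the simultaneous-diagonalization idea for a pair of quadratic forms, one of which is positive definite. The only points demanding minor care are verifying that the substitution is a genuine bijection of $\R^d$ (guaranteed by invertibility of $P_i^{1/2}$) so that the quantifier ``for all $\xi$'' transfers faithfully to ``for all $\eta$,'' and confirming that $M$ is genuinely symmetric so that $\lambda_{\max}(M)$ controls the quadratic form via the Rayleigh quotient. Establishing \emph{tightness} — that no smaller $\mu$ works — is immediate once we observe that equality in the Rayleigh bound is attained at an eigenvector of $M$ associated with $\lambda_{\max}(M)$, which I would note explicitly to justify the word ``smallest.''
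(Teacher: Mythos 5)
Your proposal is correct and follows essentially the same route as the paper's proof: the substitution $\xi = P_i^{-1/2}\eta$ reducing the problem to the symmetric matrix $P_i^{-1/2}P_jP_i^{-1/2}$, the Rayleigh-quotient identification of the optimal constant with $\lambda_{\max}$ of that matrix, and the similarity argument transferring the spectrum to $P_jP_i^{-1}$. Your explicit remark that tightness is attained at a maximal eigenvector is a welcome small addition the paper leaves implicit.
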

        See Appendix \ref{ap1} for a proof.

        The assumption of linearly comparable Lyapunov functions, i.e., there exists $\mu\geq1$ such that
        \begin{equation}
        \label{e:muprop}
            V_{j}(\xi)\leq\mu V_{i}(\xi)\quad\text{for all}\quad \xi\in\R^{d}\quad\text{and all}\quad i,j\in\P,
        \end{equation}
        is standard in the theory of stability under average dwell time switching \cite[Theorem 3.2]{Liberzon}; \eqref{e:muprop} is a special case of \eqref{e:muijprop}.

		\subsection{Properties of switching signals}

        Let $h:[0,+\infty[\:\lra[0,+\infty[$ be a continuous monotone (strictly) increasing function, with $h(0) = 0$, and $\displaystyle{\lim_{r\to+\infty}h(r) = +\infty}$. To wit, $h$ is a class $\mathcal K_{\infty}$ function \cite[Def.\ 2.5]{Hahn_stability}.

        Recall from \S\ref{s:prelims} that $\Ntsigma$ denotes the number of switches on $]0,t]$ for $t>0$. We let
        \begin{equation}
        \label{e:nuh}
            \nu_{h}(t) \Let \nuh, \qquad t > 0,
        \end{equation}
        denote the \emph{h-frequency of switching} at $t$, and we define the \emph{asymptotic upper density} of $\nu_{h}$ as
        \begin{equation}
        \label{e:nuhat}
            \hat{\nu}_{h}\Let \ls_{t\to+\infty}\nu_{h}(t).
        \end{equation}
        We define the $i$-th holding time of a switching signal $\sigma$
        \begin{equation}
        \label{e:holdingtime}
            S_{i+1} \Let \diftau,\qquad i = 0,1,\cdots,
        \end{equation}
        where $\tau_{i}$ and $\tau_{i+1}$ denote two consecutive switching instants. For each pair $(k,\ell)\in E(\P)$, we define the transition frequency from vertex $k$ to vertex $\ell$ to be
        \begin{equation}
        \label{e:rho}
            \rho_{k\ell}(t) \Let \frac{\ktol}{\Ntsigma},\qquad t > 0,
        \end{equation}
        where $\ktol$ denotes the number of transitions from vertex $k$ to vertex $\ell$ till the last switching instant $\tau_{\Ntsigma}$ before (and including) $t$. Let
        \begin{equation}
        \label{e:rhohat}
            \hat{\rho}_{k\ell} \Let \ls_{t\to+\infty}\rho_{k\ell}(t)
        \end{equation}
        denote the \emph{upper asymptotic density} of $\rho_{k\ell}$. For each $j\in \P$, we define the \emph{h-fraction of activation} of system $j$ till time $t$ as
        \begin{equation}
        \label{e:eta}
            \eta_{h}(j,t) \Let \sum_{i:\sigma(\tau_{i})=j}\frac{\holdt}{h(t)},\qquad t > 0,
        \end{equation}
        and let
        \begin{align}
        \label{e:etahatcheck}
            \hat{\eta}_{h}(j) \Let \ls_{t\to+\infty}\eta_{h}(j,t)	\quad\text{and}\quad	\check{\eta}_{h}(j)\Let \li_{t\to+\infty}\eta_{h}(j,t)
        \end{align}
        denote the \emph{upper} and \emph{lower asymptotic densities} of $\eta_{h}(j)$, respectively.

    \section{Main result}
	\label{s:mainres}
        Our main result is the following:
        \begin{theorem}
		\label{t:main}
            Consider the family of systems \eqref{e:family}. Let $\P_{AS}$ and $\P_{U}\subset\P$ denote the sets of indices of asymptotically stable and unstable systems in \eqref{e:family}, respectively. For $i, j\in\P$ let the constants \(\lambda_i\) and \(\mu_{ij}\) be as in Fact \ref{fact:key} and \eqref{e:muijprop}, respectively. Then the switched system \eqref{e:swsys} is globally asymptotically stable for every switching signal $\sigma$ satisfying
            \begin{align}
				\label{e:swfreq} \check{\nu}_{h}\Let\li_{t\to+\infty}\nu_{h}(t)	& > 0\qquad \intertext{and}
            	\label{e:thmcondn} \hat{\nu}_{h}\sum_{(k,\ell)\in E(\P)}\hat{\rho}_{k\ell}\cdot\ln{\mu_{k\ell}}	& < \sum_{j\in \P_{AS}}\abs{\lambda_{j}}\check{\eta}_{h}(j) - \sum_{j\in \P_{U}}\abs{\lambda_{j}}\hat{\eta}_{h}(j),
            \end{align}
            where $\hat{\nu}_{h}$, $\hat{\rho}_{k\ell}$, $\hat{\eta}_{h}(j)$, and $\check{\eta}_{h}(j)$ are as in \eqref{e:nuhat}, \eqref{e:rhohat}, and \eqref{e:etahatcheck}, respectively.
        \end{theorem}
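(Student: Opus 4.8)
My plan is to monitor the single scalar $t\mapsto V_{\sigma(\tau_{\Ntsigma})}(x(t))$, the Lyapunov-like function of the currently active mode evaluated along the solution, and to show it tends to $0$; since $\P$ is finite and each $P_i$ is positive definite, there are constants $0<c\le C$ with $c\norm{\xi}^2\le V_i(\xi)\le C\norm{\xi}^2$ for all $i\in\P$ and $\xi\in\R^d$, so this forces $\norm{x(t)}\to0$, i.e.\ global asymptotic convergence, which by Fact \ref{fact:gas} yields GAS. To get there I would chain the two basic estimates: on $[\tau_i,\tau_{i+1})$ the active mode is $\sigma(\tau_i)$, so \eqref{e:Lyapprop} gives $V_{\sigma(\tau_i)}(x(\tau_{i+1}))\le V_{\sigma(\tau_i)}(x(\tau_i))\exp(-\lambda_{\sigma(\tau_i)}(\tau_{i+1}-\tau_i))$, while at the switch $\tau_{i+1}$ continuity of $x$ and \eqref{e:muijprop} give $V_{\sigma(\tau_{i+1})}(x(\tau_{i+1}))\le\mu_{\sigma(\tau_i)\sigma(\tau_{i+1})}V_{\sigma(\tau_i)}(x(\tau_{i+1}))$. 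Fixing $t$ with $\Ntsigma=n$ (so $t\in[\tau_n,\tau_{n+1})$) and iterating yields
\[
  V_{\sigma(\tau_n)}(x(t))\le V_{\sigma(\tau_0)}(x_0)\exp(\Theta(t)),\quad \Theta(t)\Let\sum_{i=0}^{n-1}\ln\mu_{\sigma(\tau_i)\sigma(\tau_{i+1})}-\sum_{i=0}^{n-1}\lambda_{\sigma(\tau_i)}(\tau_{i+1}-\tau_i)-\lambda_{\sigma(\tau_n)}(t-\tau_n),
\]
so it suffices to prove $\Theta(t)\to-\infty$ as $t\to+\infty$.

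The crucial simplification is that $\Theta$ is affine, hence monotone, in $t$ on each $[\tau_n,\tau_{n+1})$: writing $C_n\Let\Theta(\tau_n)$, the only $t$-dependence is the term $-\lambda_{\sigma(\tau_n)}(t-\tau_n)$. If $\sigma(\tau_n)\notin\P_{U}$ then $\Theta(t)\le C_n$; if $\sigma(\tau_n)\in\P_{U}$ then $\Theta(t)\le C_n+\abs{\lambda_{\sigma(\tau_n)}}(\tau_{n+1}-\tau_n)=C_{n+1}-\ln\mu_{\sigma(\tau_n)\sigma(\tau_{n+1})}\le C_{n+1}+M$, where $M\Let\max_{(k,\ell)\in E(\P)}\abs{\ln\mu_{k\ell}}$. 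Thus $\Theta(t)\le\max\{C_n,C_{n+1}\}+M$ on $[\tau_n,\tau_{n+1})$, and it is enough to show $C_n\to-\infty$. Here \eqref{e:swfreq} plays its role: it forbids a signal with only finitely many switches (which would force $\check{\nu}_{h}=0$), so $\Ntsigma\to\infty$ and $\tau_n\to\infty$ as $t\to+\infty$, making the reduction to switching instants legitimate.

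Next I would regroup $C_n$: the transition sum, counted by edges, equals $n\sum_{(k,\ell)\in E(\P)}\rho_{k\ell}(\tau_n)\ln\mu_{k\ell}$, and the decay sum, counted by nodes, equals $h(\tau_n)\sum_{j\in\P}\lambda_j\eta_h(j,\tau_n)$, whence
\[
  \frac{C_n}{h(\tau_n)}=\nu_h(\tau_n)\sum_{(k,\ell)\in E(\P)}\rho_{k\ell}(\tau_n)\ln\mu_{k\ell}-\sum_{j\in\P}\lambda_j\,\eta_h(j,\tau_n).
\]
Passing to $\ls_{t\to+\infty}$ of the right-hand side (and using $\ls$ along $\{\tau_n\}\le\ls_t$), I would invoke subadditivity of $\limsup$, the bound $\ls(fg)\le(\ls f)(\ls g)$ for nonnegative $f,g$, $\ls(-g)=-\li(g)$, and superadditivity of $\liminf$. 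As marginally stable modes contribute $0$, and $\lambda_j=\abs{\lambda_j}$ on $\P_{AS}$ while $-\lambda_j=\abs{\lambda_j}$ on $\P_{U}$, this bounds $\ls_t C_n/h(\tau_n)$ by
\[
  \hat{\nu}_{h}\sum_{(k,\ell)\in E(\P)}\hat{\rho}_{k\ell}\ln\mu_{k\ell}+\sum_{j\in\P_{U}}\abs{\lambda_j}\hat{\eta}_{h}(j)-\sum_{j\in\P_{AS}}\abs{\lambda_j}\check{\eta}_{h}(j),
\]
which is exactly the quantity \eqref{e:thmcondn} declares strictly negative; call it $-\delta_0$, $\delta_0>0$. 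Then $C_n\le-(\delta_0/2)h(\tau_n)$ for large $n$, and $h(\tau_n)\to+\infty$ gives $C_n\to-\infty$; with the previous paragraph, $\Theta(t)\to-\infty$, so $V_{\sigma(\tau_{\Ntsigma})}(x(t))\to0$ and $\norm{x(t)}\le\sqrt{C/c}\,\norm{x_0}\exp(\Theta(t)/2)\to0$ uniformly in $x_0$, completing the argument.

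I expect the real obstacle to be the final \emph{incomplete} interval $[\tau_{\Ntsigma},t]$, on which an unstable mode may inject energy not yet recorded by $\eta_h$; dividing $\Theta(t)$ by $h(t)$ naively leaves a term $\abs{\lambda_{\sigma(\tau_{\Ntsigma})}}(t-\tau_{\Ntsigma})/h(t)$ whose $\limsup$ is awkward to control directly. The affine-monotonicity step is precisely what disposes of this, charging the partial interval to the completed one at the next switch (which is also where the normalisation $\mu_{k\ell}\ge1$ is convenient). A secondary delicate point is that the interchange of $\limsup$ with the finite edge sum and with the factor $\nu_h$ is licit only because $\ln\mu_{k\ell}\ge0$ and $\nu_h,\rho_{k\ell},\eta_h\ge0$; absent $\mu_{k\ell}\ge1$ one would have to replace some upper densities by lower ones, so I read the hypothesis as implicitly imposing $\mu_{k\ell}\ge1$.
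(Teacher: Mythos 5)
Your proposal is correct, and its skeleton --- multiple Lyapunov-like functions, the telescoped bound $V_{\sigma(t)}(x(t))\le V_{\sigma(\tau_0)}(x_0)\exp(\Theta(t))$, regrouping the exponent into $\nu_h$, $\rho_{k\ell}$ and $\eta_h$, and then $\limsup$/$\liminf$ calculus against \eqref{e:thmcondn}, followed by the observation that $x_0$ decouples from the exponent so that convergence is uniform over $\norm{\tilde x_0}\le\norm{x_0}$ --- is exactly the paper's. The one genuine divergence is the treatment of the incomplete interval $[\tau_{\Ntsigma},t]$. You exploit the affine monotonicity of $\Theta$ on $[\tau_n,\tau_{n+1})$ to charge the possibly destabilizing tail $\abs{\lambda_{\sigma(\tau_n)}}(t-\tau_n)$ to the completed quantity $C_{n+1}$ (up to the bounded correction $M$), and then run the asymptotics along the sequence of switching instants, where the tail has been absorbed into $\eta_h(\cdot,\tau_{n+1})$ and hence into $\hat{\eta}_{h}$. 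The paper instead keeps the term $-\lambda_{\sigma(\tau_{\Ntsigma})}(t-\tau_{\Ntsigma})/h(t)$ inside $g(t)$ and dismisses it as $o(1)$ by arguing that $t-\tau_{\Ntsigma}\neq o(h(t))$ would force $\check{\nu}_{h}=0$; that implication is delicate (a signal with sparse but proportionally long holding intervals can have $\check{\nu}_{h}>0$ while $(t-\tau_{\Ntsigma})/h(t)\not\to 0$), so your charging argument is the more robust way to close this case and is what the condition \eqref{e:thmcondn} actually supports. Your side remark that interchanging $\limsup$ with the edge sum and with the factor $\nu_h$ needs $f\ge 0$ --- e.g.\ $\mu_{k\ell}\ge 1$, which one may always arrange since any larger constant still satisfies \eqref{e:muijprop} --- is also well taken: the paper invokes the same product inequality \eqref{e:proof11} and leaves this nonnegativity implicit (its numerical example even has some $\mu_{k\ell}<1$).
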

        A proof of Theorem \ref{t:main} is provided in Appendix \ref{ap2}. We explain Theorem \ref{t:main} and place it in the context of preceding works in the remainder of this section:

        \begin{remark}
		\label{r:discrete vs continuous}
			The quantity \(\hat\rho_{k\ell}\) in \eqref{e:thmcondn}---the asymptotic density of the \emph{number} of transitions from \(k\) to \(\ell\)---is concerned purely with the discrete transitions; it is not influenced by the continuous-time properties of the switching signal. In contrast, \(\hat \nu_h\), \(\check\eta_h(j)\), and \(\hat\eta_h(j)\) depend on continuous-time properties of the switching signal.
		\end{remark}

		\begin{remark}
            The class $\mathcal{K}_{\infty}$ function $h$ with respect to which we measure the asymptotic densities in \eqref{e:nuhat} and \eqref{e:etahatcheck} provides substantial flexibility in terms of admissibility of switching signals. Indeed, switching signals with $\Ntsigma$ being of the order $O(t\ln t)$, $O(t^{p})$ for $p>0$, as $t\rightarrow+\infty$, can be captured here: simply take $h(t) = t\ln(1+t)$, $t^{p}$, respectively, in the two cases above. In other words, $\Ntsigma\leq k_{1}t^{\frac{3}{2}}+k_{2}t+k_{3}$ for $k_{1}, k_{2}, k_{3}>0$, is permissible; in the last case Theorem \ref{t:main} can be applied with $h(t) = t^{\frac{3}{2}}$. It therefore enlarges, by a significant margin over preceding results, the class of switching signals under which global asymptotic stability of switched systems can be studied.
		\end{remark}

        \begin{remark}
        \label{r:theorem description}
            Observe that inequality \eqref{e:thmcondn} in Theorem \ref{t:main} involves \emph{only} asymptotic quantities related to the switching signal, namely, the asymptotic ``densities'' of the switching frequency, fraction of activation of the constituent systems and admissible transitions among them. On the left-hand side of \eqref{e:thmcondn}, the factor $\displaystyle{\sum_{(k,\ell)\in E(\P)}\hat{\rho}_{k\ell}\ln\mu_{k\ell}}$ multiplying the upper asymptotic density of the $h$-frequency $\nu_{h}$ of switching---a zeroth order property of the switching signal---contains the upper asymptotic density of ${\rho}_{k\ell}$, the frequency of admissible transitions among the systems in the given family \eqref{e:family}---a second order property of the switching signal. The terms on the right-hand side of \eqref{e:thmcondn} involve the switching destinations---a first order property of the switching signal: the first (\emph{resp}.\ second) term comprises of the lower (\emph{resp}.\ upper) asymptotic density of the $h$-fraction of activation of the asymptotically stable (\emph{resp}.\ unstable) systems in \eqref{e:family}, weighted by the corresponding quantitative measures of (in)stability.
        \end{remark}

        \begin{remark}
			The hypothesis \eqref{e:swfreq} $\displaystyle{\liminf_{t\to+\infty} \nu_{h}(t) > 0}$ is necessary to prevent the switched system from eventually ``adhering to'' an unstable system. This assumption is peculiar to our setting because we admit unstable systems in the family \eqref{e:family} in contrast to preceding works on average dwell time switching, where no unstable system in \eqref{e:family} is permitted.
		\end{remark}

		\begin{remark}
        \label{r:tightness}
            Although the condition \eqref{e:thmcondn} relies on the choice of the Lyapunov-like functions \(V_i\), we contend that in certain cases the condition \eqref{e:thmcondn} is \emph{tight}. Let us see what may happen at the ``boundary" of the inequality \eqref{e:thmcondn}. Consider, for instance, the family of scalar linear systems consisting of two elements:
            \[
                \dot x(t) = \lambda_i x(t),\qquad x(0) = x_0 > 0 \text{ given},\quad t\geq 0,
            \]
            where \(i \in \P \Let \{1, 2\}\), \(\lambda_1 = -\lambda_2 = \lambda\), where \(\lambda > 0\) is a fixed constant. Suppose the switching signal \(\sigma\) begins from \(i = 1\), is such that \(\tau_i = i\) for each \(i = 0, 1, \ldots\), and at each \(\tau_{i+1}\) we have \(\sigma(\tau_{i+1}) = \P\setminus\{\sigma(\tau_i)\}\). Straightforward calculations show that with \(h\) equal to the identity function, \(\check \nu_h = \hat \nu_h = 1\), and \(\check \eta_h(j) = \hat \eta_h(j) = \frac{1}{2}\) for each \(j\in\P\), and \(\hat\rho_{12} = \hat\rho_{21} = \frac{1}{2}\). With the obvious choice of Lyapunov-like functions $\R\ni z\longmapsto V_i(z) \Let \abs{z}^2$, one sees that \(\mu_{12} = \mu_{21} = 1\), which implies that the left-hand side of \eqref{e:thmcondn} is \(0\). In view of the data above, the right-hand side also evaluates to \(0\), which shows that \eqref{e:thmcondn} does not hold. Indeed, in this case the left-hand side is equal to the right-hand side instead of being strictly smaller; consequently, Theorem \ref{t:main} does not apply. Observe that the switched linear system
            \[
                \dot x(t) = \lambda_{\sigma(t)} x(t),\qquad x(0) = x_0 > 0 \text{ given}, \quad t\geq 0,
            \]
            for the given \(\sigma\) is \emph{not} GAS---the trajectories starting from non-zero initial conditions oscillate without approaching \(0\).
        \end{remark}

        \begin{remark}
        \label{r:adt}
            Recall \cite[\S3.2.2]{Liberzon} that a switching signal $\sigma$ is said to have average dwell time $\tau_{a}>0$ if there exist two positive numbers $N_{0}$ and $\tau_{a}$ such that $N_\sigma(T,t)\leq N_{0}+\frac{T-t}{\tau_{a}}$ for all $T \geq t\geq 0$, where $N_\sigma(T,t)$ denotes the number of discontinuities of a switching signal $\sigma$ on an interval $]t,T]$. (The case of \(N_0 = 1\) corresponds to switching signals having a fixed dwell time.) From the preceding inequality it is clear that the average dwell time condition imposes point-wise bounds on the number of switches---on \emph{any} time interval the number of switches can grow at most as an affine function of the length of the interval. In contrast, the condition \eqref{e:thmcondn} does not involve such point-wise bounds. The number of switches $\Ntsigma$, on the interval \(]0, t]\), can grow faster than an affine function of $t$; indeed, $\Ntsigma$ obeying $k_{0}t - k_0'\sqrt{t} \leq\Ntsigma \le k_{1}+k'_{1}t+k''_{1}\sqrt{t}$ for positive constants $k_{0}$, $k'_{0}$, $k_{1}$, $k'_{1}$, \(k''_1\), is perfectly admissible, with \(h\) being the identity function.
        \end{remark}

        \begin{remark}
        \label{r:uniformity}
            According to Theorem \ref{t:main}, under all switching signals $\sigma$ satisfying \eqref{e:thmcondn}, the switched linear system \eqref{e:swsys} is globally asymptotically stable. Recall \cite[$\S$2.1.1]{Liberzon} that the switched system \eqref{e:swsys} is globally uniformly asymptotically stable if there exists a class-$\mathcal{KL}$ function $\beta$ such that for all switching signals $\sigma$ and all initial conditions, the solutions of $\eqref{e:swsys}$ satisfy $\norm{x(t)} \leq \beta(\norm{x_0}, t)$ for all $t\geq0$. Uniform global asymptotic stability also arises in the context of restricted switching. Indeed, the average dwell time condition guarantees global uniform asymptotic stability over all switching signals with given average dwell time $\tau_{a} > \frac{\ln\mu}{2\lambda_0}$ and given ``chatter-bound'' $N_{0}$, where $\mu$ is as in \eqref{e:muprop}, and $N_{0}, \lambda_{0}$ are positive numbers as in \cite[Theorem 3.2, Remark 3.2]{Liberzon}. Theorem \ref{t:main}, however, does not guarantee uniform stability in this sense. To wit, let \(\sigma, \sigma'\) be two admissible switching signals satisfying the hypotheses of Theorem \ref{t:main}, and let \((x_\sigma(t))_{t\geq 0}\) and \((x_{\sigma'}(t))_{t\geq 0}\) be the corresponding solutions to \eqref{e:swsys}, respectively. In this setting,
			\begin{itemize}[label=\(\circ\), leftmargin=*]
				\item Theorem \ref{t:main} asserts that there are two class-$\mathcal{KL}$ functions \(\beta, \beta'\) such that, for all \(t\geq 0\) and all \(x_0\in\R^d\), \(\norm{x_\sigma(t)} \le \beta(\norm{x_0}, t)\) and \(\norm{x_{\sigma'}(t)} \le \beta'(\norm{x_0}, t)\);
				\item Theorem \ref{t:main} does \emph{not} claim that \(\beta = \beta'\).
			\end{itemize}
			This apparent deficiency is only natural in view of the fact that we place no restriction on the transient behavior of the switching signal; the latter would be crucial in obtaining uniform estimates.
        \end{remark}

		\begin{remark}
            The classical result \cite[Theorem 3.2]{Liberzon} on average dwell time switching, including the uniform stability assertions as mentioned in Remark \ref{r:uniformity}, follow immediately from straightforward and quite obvious modifications to our proof of Theorem \ref{t:main}.
        \end{remark}

        \begin{remark}
            Theorem \ref{t:main} strongly hints at a high degree of robustness inherent in stability of switched linear systems \eqref{e:swsys}. Indeed, note that only asymptotic properties of the switching signals matter insofar as GAS is concerned, despite the fact that there may be unstable systems in the family \eqref{e:family}. In addition, perturbations to $\sigma$ that decay faster than the dominant rate of switching, as measured by the class $\mathcal{K}_\infty$ function $h$, do not destabilize a GAS switched linear system. This feature is currently under investigation, and will be reported elsewhere.
        \end{remark}

        \begin{remark}
             Theorem \ref{t:main} extends readily to the case of discrete-time switched linear systems, and also to the case of switched nonlinear systems under appropriate and standard assumptions. This theme will be expanded upon and reported separately.
        \end{remark}

    \section{Numerical example}
	\label{s:example}


        In this section we present a simple numerical example.

        \subsection{The systems}
         We consider $\P = \{1, 2, 3, 4\}$ with $\P_{AS} = \{1\}$, $\P_{MS} = \{2\}$, and $\P_{U} = \{3,4\}$, with
         \begin{align*}
         A_{1} &= \pmat{-0.1 & -0.2\\0.1 & -0.4},& A_{2} &= \pmat{0 & 0.1\\-0.1 & 0},\\
         A_{3} &= \pmat{0.1 & 0.2\\0.4 & 0.3},& A_{4} &= \pmat{0.2 & 0.1\\0.3 & 0}.
         \end{align*}
         The pairs $(P_{i},\lambda_{i})$ discussed in Fact \ref{fact:key}, are obtained as:
         \begin{align*}
            (P_{1},\lambda_{1}) &= \Biggl(\pmat{3.8333 & -1.1667\\-1.1667 & 1.8333},0.2288\Biggr),&
            (P_{2},\lambda_{2}) &= \Biggl(\pmat{1 & 0\\0 & 1},0\Biggr),\\
            (P_{3},\lambda_{3}) &= \Biggl(\pmat{1 & 0\\0 & 1}, -1.0797\Biggr),&
            (P_{4},\lambda_{4}) &= \Biggl(\pmat{1 & 0\\0 & 1}, -0.7301\Biggr).
         \end{align*}
        A directed graph representation of the switching signal is below.

        \begin{center}
        \begin{tikzpicture}[every path/.style={>=latex},every node/.style={draw,circle}]
            \node            (a) at (0,0)  { 1 };
            \node            (b) at (2,0)  { 3 };
            \node            (c) at (2,-2) { 2 };
            \node            (d) at (0,-2) { 4 };

            \draw[->] (a) edge (b);
            \draw[<-] (a) edge [bend left] (b);
            \draw[->] (a) edge (c);
            \draw[<-] (a) edge [bend left] (c);
            \draw[->] (a) edge (d);
            \draw[<-] (a) edge [bend right] (d);
            \draw[->] (b) edge (c);
            \draw[<-] (b) edge [bend left] (c);
            \draw[->] (c) edge (d);
            \draw[<-] (c) edge [bend left] (d);

        \end{tikzpicture}
        \end{center}

        \subsection{The switching signal}
        Let $\varepsilon = 0.1$ and $\eta = 0.5$. Let $h$ be the identity function, and suppose that $\sigma$ is such that
        \begin{itemize}[label=\(\circ\), leftmargin=*]
            \item $\Ntsigma = \frac{t}{3}+t^{1-\eta}$, $\eta_{h}(1,t) = \frac{t}{1+\eps}-t^{1-\eta}$, $\eta_{h}(2,t) = t^{\frac{1}{9}}$, $\eta_{h}(3,t) = \eta_{h}(4,t) = \frac{1}{2}(t-(\eta_{h}(1,t)+\eta_{h}(2,t)))$, and
            \item ${\rho}_{k\ell} = \frac{1}{10}$ for each pair $(k,\ell)\in E(\P)$.
        \end{itemize}
        Consequently,
        \begin{itemize}[label=\(\circ\), leftmargin=*]
            \item $\hat{\eta}_{h}(3) = \hat{\eta}_{h}(4) = \frac{\varepsilon}{2(1+\varepsilon)}$ and $\check{\eta}_{h}(1) = \frac{1}{1+\varepsilon}$; note that systems 3 and 4 are unstable while 1 is asymptotically stable,
            \item $\hat{\nu}_{h} = \frac{1}{3}$, and
            \item $\hat{\rho}_{k\ell} = \frac{1}{10}$ for each pair $(k,\ell)\in E(\P)$.
        \end{itemize}
        We calculate $\mu_{k\ell}$ for each pair $(k,\ell)\in E(\P)$ using \eqref{e:muijestimate}, and obtain
        \begin{align*}
        \mu_{12} &= 0.7712 , & \mu_{13} &= 0.7712 , & \mu_{14} &= 0.7712 , & \mu_{21} &= 4.3699 , & \mu_{23} &= 1 , &\\
        \mu_{24} &= 1, & \mu_{31} &= 4.3699, & \mu_{32} &= 1, & \mu_{41} &= 4.3699, & \mu_{42} &= 1.
        \end{align*}
        Observe from the above directed graph that the total number of admissible transitions is $10$. We verify that
            \[
                \displaystyle{\hat{\nu}_{h}\sum_{(k,\ell)\in E(\P)}\hat{\rho}_{k\ell}\ln\mu_{k\ell} = 0.12149}
            \]
            and
            \[
                \displaystyle{\sum_{j\in\P_{AS}}\abs{\lambda_{j}}\check{\eta}_{h}-\sum_{j\in\P_{U}}\abs{\lambda_{j}}\hat{\eta}_{h} = 0.12574}.
            \]
        Therefore, \eqref{e:thmcondn} holds, and Theorem \ref{t:main} asserts that the switched system \eqref{e:swsys} with the above data is GAS.

        \subsection{The verification}

        We study the response for $(x(t))_{t\geq0}$ corresponding to different initial conditions $x_{0}$ and observe that in each case $x(t)\to 0$ as $t\to+\infty$. The responses for $x_{0} =  \pmat{-1000\\1000}$ and $x_{0} =  \pmat{500\\-1200}$ are given in Figures \ref{f:response1} and \ref{f:response2}, respectively. Observe the sluggish rate of convergence in each case.

        \begin{figure}[htbp]
        \begin{center}
            \includegraphics[scale=0.5]{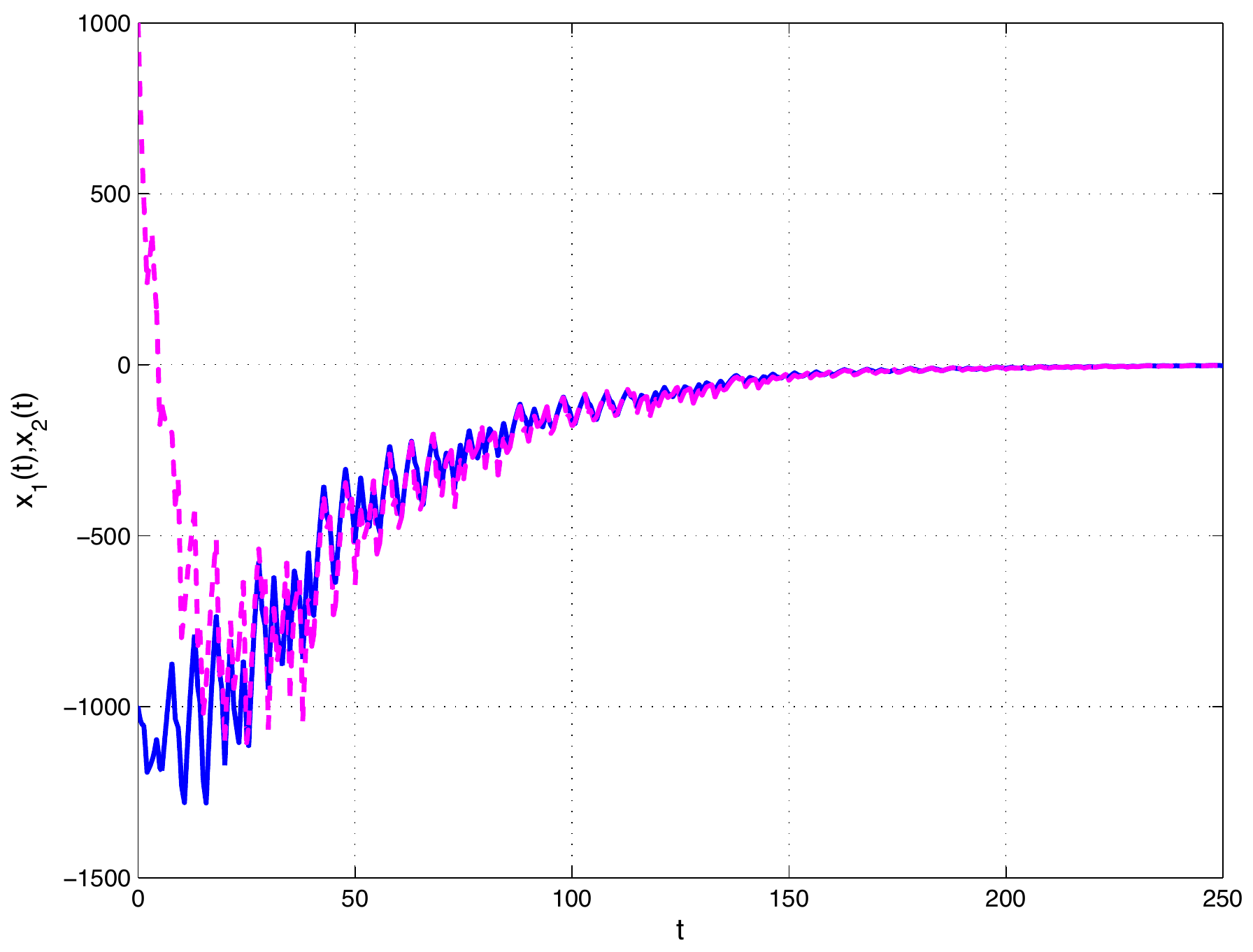}
            \caption{Solution to \eqref{e:swsys} with $x_{1}(0) = -1000, x_{2}(0) = 1000$. The graphs of $x_{1}(\cdot)$ and $x_{2}(\cdot)$ are plotted in solid and dashed lines, respectively}
            \label{f:response1}
        \end{center}
        \end{figure}

        \begin{figure}[htbp]
        \begin{center}
            \includegraphics[scale=0.5]{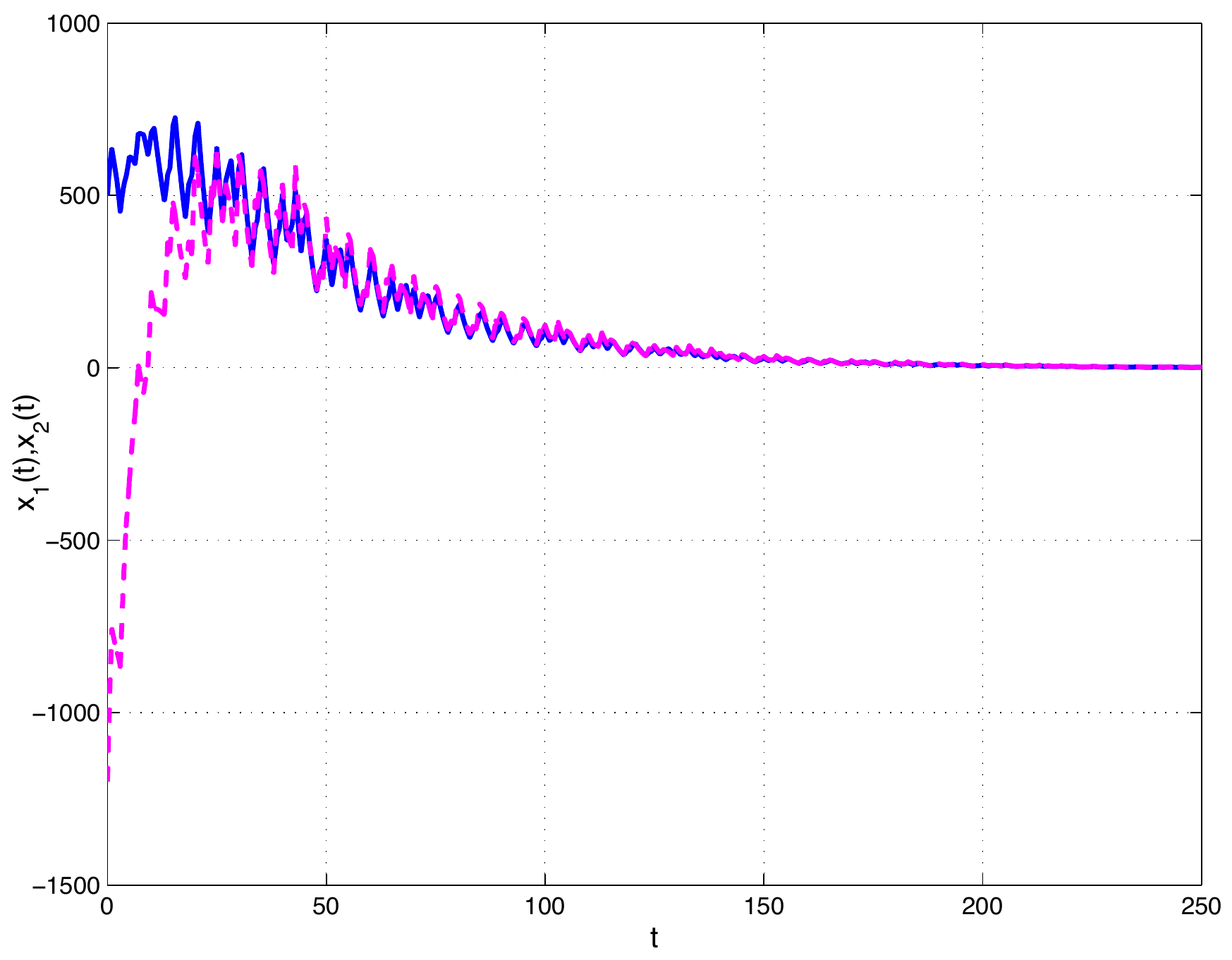}
            \caption{Solution to \eqref{e:swsys} with $x_{1}(0) = 500, x_{2}(0) = -1200$. The graphs of $x_{1}(\cdot)$ and $x_{2}(\cdot)$ are plotted in solid and dashed lines, respectively}
            \label{f:response2}
        \end{center}
        \end{figure}

    \section{Concluding remarks}
	\label{s:conclusion}
        In this article we introduced a class of switching signals under which global asymptotic stability of a continuous-time switched linear system is guaranteed. We considered unstable systems in the family and involved only asymptotic behaviour of the zeroth order (frequency of switching), first order (switching destinations), and second order (admissible transitions among these destinations) properties of the switching signal to characterize the proposed class of switching signals. Further work in this direction will focus on aspects of reachability $\grave{a}$ la \cite{Liberzon_reach} of switched systems under such switching signals.

    \appendix
    \section{}
    \label{ap1}
    This appendix collects our proofs of the facts presented in \S\ref{s:prelims}. Throughout this appendix $\lambda_{\max}(M)$ and $\lambda_{\min}(M)$ denote the maximal and minimal eigenvalues of a square matrix $M$.

    \begin{proof}[Proof of Fact \ref{fact:gas}]
        Recall that by definition, uniform global asymptotic convergence of \eqref{e:swsys} is equivalent to the following property:
		\begin{quote}
			for all $\eps>0$ and for all $r>0$ there exists $T(r,\eps)>0$ such that $\norm{x_{0}}<r$ implies $\norm{x(t)}<\eps$ for every $t>T(r,\eps)$,
		\end{quote}
		and Lyapunov stability of \eqref{e:swsys} is equivalent to:
		\begin{quote}
			for all $\eps>0$ there exists $\delta(\eps)>0$ such that $\norm{x_{0}}<\delta(\eps)$ implies $\norm{x(t)}<\eps$.
		\end{quote}
		We need to establish Lyapunov stability of \eqref{e:swsys} under any switching signal \(\sigma\) that ensures that \eqref{e:swsys} is globally asymptotically convergent.\par
		\noindent\textsf{Step 1.} Let \(\eps > 0\) be given and \(\sigma\) be an admissible switching signal such that \eqref{e:swsys} enjoys the global asymptotic convergence property. To wit, there exists \(T(1, \eps) > 0\) such that \(\norm{x(t)} < \eps\) for all \(t > T(1, \eps)\) whenever \(\norm{x_0} < 1\).\par
		\noindent\textsf{Step 2.} Note that the family \eqref{e:family} is globally uniformly Lipschitz, with \(L \Let \max_{i\in\P} \norm{A_i}\) serving as the uniform Lipschitz constant over \(\P\). It follows that if \((x(t))_{t\geq 0}\) is the solution to \eqref{e:swsys} under \emph{any} admissible switching signal \(\sigma\), then
		\[
			\abs{\frac{\drv}{\drv t}\norm{x(t)}^2} = \abs{2 x(t)\transp A_{\sigma(t)} x(t)} \le 2 L \norm{x(t)}^2
		\]
		by the Cauchy-Schwarz inequality. It follows at once that
		\[
			\norm{x_0} \exp(-L t) \le \norm{x(t)} \le \norm{x_0} \exp(Lt)\qquad\text{for all }t\geq 0.
		\]
		Selecting \(\delta' = \eps \exp(-LT(1, \eps))\), we see from the preceding inequality that \(\norm{x(t)} < \eps\) for all \(t \in[0, T(1, \eps)]\) whenever \(\norm{x_0} < \delta'\) and \(\sigma\) is arbitrary (but admissible).\par
		\noindent\textsf{Step 3.} It remains to specialize to the switching signal \(\sigma\) fixed in \textsf{Step 1}, and select \(\delta = \min\{1, \delta'\}\) to verify Lyapunov stability of \eqref{e:swsys}.
    \end{proof}

    \begin{proof}[Proof of Fact \ref{fact:key}]
        For asymptotically stable systems let $\R^d\ni z\longmapsto V_{i}(z) \Let z^{\top}P_{i}z$, where $P_i \in\R^{d\times d}$ is the symmetric positive definite solution to the Lyapunov equation
        \begin{equation}
		\label{e:Lyap equation}
            {A_{i}}^{\top}P_{i}+P_{i}A_{i}+Q_{i} = 0
        \end{equation}
		for some pre-selected symmetric and positive definite matrix \(Q_i\in\R^{d\times d}\) \cite[Corollary 11.9.1]{Bernstein}. If \(A_i\) is marginally stable, it is known \cite[Corollary 11.9.1]{Bernstein} that there exists a symmetric and positive definite matrix \(P_i\in\R^{d\times d}\) and a symmetric and non-negative definite matrix \(Q_i\in\R^{d\times d}\) that solve the Lyapunov equation \eqref{e:Lyap equation}; we put \(\R^d\ni z\longmapsto V_i(z) \Let z^\top P_i z\) as the corresponding Lyapunov-like function. A straightforward calculation gives
        \[
            \frac{\drv}{\drv t}V_{i}(\gamma_i(t)) = \inprod{\nabla V_{i}(\gamma_i (t))}{A_{i}\gamma_i(t)} = -\gamma_i(t)^\top Q_{i}\gamma_i(t)
        \]
		in both cases, where \(\gamma_i(\cdot)\) solves the \(i\)-th system dynamics in \eqref{e:family}. Since for any symmetric matrix \(\Xi\in\R^{d\times d}\) we have (\cite[Lemma 8.4.3]{Bernstein})
        \[
            \lambda_{\min}(\Xi)\norm{z}^2\leq z^{\top}\Xi z\leq\lambda_{\max}(\Xi)\norm{z}^2 \quad\text{for all}\quad z\in\R^{d},
        \]
        for all $z\in\R^{d}$ we get
        \[
            -z^{\top}Q_{i}z\leq-\frac{\lambda_{\min}(Q_{i})}{\lambda_{\max}(P_{i})}z^{\top}P_{i}z.
        \]
        Defining $\lambda_{i} = \frac{\lambda_{\min}(Q_{i})}{\lambda_{\max}(P_{i})}$, we arrive at
        \[
            \frac{\drv}{\drv t}V_{i}(\gamma_i(t))\leq -\lambda_{i}V_{i}(\gamma_i(t))
        \]
        which gives \eqref{e:Lyapprop} with $\lambda_{i}\geq0$.

        For unstable systems, let us consider the simplest case of a symmetric and positive definite matrix $P_{i} = I_{d}$, and let $\R^{d}\ni z\longmapsto V_{i}(z) \Let \norm{z}^{2}$. Then, by the Cauchy-Schwarz inequality,
    \begin{align*}
        \abs{\frac{\drv}{\drv t}V_{i}(\gamma_i(t))} &= \abs{\inprod{\nabla V_{i}(\gamma_i(t))}{A_{i}\gamma_i(t)}}\\
        & \le 2\norm{A_{i}\gamma_i(t)}\norm{\gamma_i(t)}\\
        & \le 2\norm{A_{i}}V_{i}(\gamma_i(t)).
    \end{align*}
    To wit,
    \[
    	-2\norm{A_{i}}V_{i}(\gamma_i(t))\leq\frac{\drv}{\drv t}V_{i}(\gamma_i(t))\leq 2\norm{A_{i}}V_{i}(\gamma_i(t)),
    \]
    which implies that
    \[
        V_{i}(\gamma_i(t))\leq V_{i}(\gamma_i(0))\exp(2\norm{A_{i}}t)\quad\text{for all}\quad t\geq0.
    \]
    Therefore, \eqref{e:Lyapprop} holds for unstable systems with $\lambda_{i}$ negative.
    \end{proof}

    \begin{proof}[Proof of Proposition \ref{p:muijestimate}]
    	Since $P_{i}$ is symmetric and positive definite, it is non-singular. Observe that $P_{j}{P_{i}}^{-1}$ is similar to $P_{i}^{-1/2}(P_{j}P_{i}^{-1})P_{i}^{1/2}$, and that the matrix $P_{i}^{-1/2}P_{j}P_{i}^{-1/2}$ is symmetric and positive definite. Since the spectrum of a matrix is invariant under similarity transformations, the eigenvalues of $P_{j}{P_{i}}^{-1}$ are the same as the eigenvalues of $P_{i}^{-1/2}P_{j}P_{i}^{-1/2}$; consequently, the eigenvalues of $P_{j}{P_{i}}^{-1}$ are real numbers. In addition,
    	\begin{align*}
       	 \sup_{0\neq z\in\R^{d}}\frac{\inprod{P_{j}z}{z\rangle}}{\inprod{P_{i}z}{z}} &= \sup_{0\neq z\in\R^{d}}\frac{\inprod{P_{j}z}{z}}{\inprod{P_{i}^{1/2}z}{P_{i}^{1/2}z}}&&\\
       	 &= \sup_{0\neq y\in\R^{d}}\frac{\inprod{P_{j}(P_{i}^{-1/2}y)}{P_{i}^{-1/2}y}}{\inprod{y}{y}}&&\text{with $z \Let P_{i}^{-1/2}y$}&&\\
       	 &= \sup_{0\neq y\in\R^{d}}\frac{\inprod{P_{i}^{-1/2}P_{j}P_{i}^{-1/2}y}{y}}{\langle y,y\rangle}&&\\
       	 &= \lambda_{\max}(P_{i}^{-1/2}P_{j}P_{i}^{-1/2})&&\\
       	 &= \lambda_{\max}(P_{j}P_{i}^{-1}),
    	\end{align*}
    	Since ${V_{j}(z)}\leq\mu_{ij}{V_{i}(z)}$ for all $z\in\R^{d}$, the smallest constant $\mu_{ij}$ satisfies \eqref{e:muijestimate}. A similar computation appears in \cite[Lemma 13]{Aneel11}.
    \end{proof}

    \section{}
    \label{ap2}
    This appendix contains our

    \begin{proof}[Proof of Theorem \ref{t:main}]
    We will employ properties of the quadratic Lyapunov-like function $V_{i}$ for all $i\in\P$. Recall that $\swinstants$ are the switching instants before (and including) $t>0$.

    In view of \eqref{e:Lyapprop},
    \begin{align}
    \label{e:proof1}
        V_{\sigma(t)}(x(t))\leq\exp{(-\lambda_{\sigma(\tau_{\Ntsigma})}(t-\tau_{\Ntsigma}))}V_{\sigma(t)}(x(\tau_{\Ntsigma})).
    \end{align}
    By \eqref{e:muijprop},
    \begin{align*}
        V_{\sigma(\tau_{i+1})}(x(\tau_{i+1}))\leq\mu_{\sigma(\tau_{i})\sigma(\tau_{i+1})}V_{\sigma(\tau_{i})}(x(\tau_{i+1})).
    \end{align*}
    In conjunction with \eqref{e:proof1}, a straightforward iteration leads to
    \begin{equation}
    \label{e:proof2}
	\begin{aligned}
        V_{\sigma(t)}(x(t))	&\leq V_{\sigma_{(0)}}(x_{0})\muprod\\
        					&\quad\exp\Bigl(-\lambda_{\sigma(\tau_{\Ntsigma})}(\dift)-\tausum\Bigr).
	\end{aligned}
    \end{equation}

    First of all, defining $\bar{\mu}_{ij} = \ln{\mu_{ij}}$, we have
    \begin{align}
    \muprod &= \exp\Biggl(\sum_{i=0}^{\Ntsigma}\bar{\mu}_{\sigma(\tau_{i})\sigma(\tau_{i+1})}\Biggr) = \exp\left(\sum_{k\in \P}\sum_{i=0}^{\Ntsigma}\sum_{\substack{k\rightarrow\ell:\\l\in \P,\\k\neq\ell,\\\sigma(\tau_{i})=k,\\\sigma(\tau_{i+1})=l}}\bar{\mu}_{kl}\right)\nonumber\\
    &= \exp\biggl(\sum_{(k,l)\in E(\P)}\bar{\mu}_{kl}\cdot\ktol\biggr)\nonumber\\
    &= \exp\biggl(\Ntsigma\sum_{(k,l)\in E(\P)}\bar{\mu}_{kl}\cdot\frac{\ktol}{\Ntsigma}\biggr) \label{e:proof3},
    \end{align}
    Recall that $\ktol$ denotes the number of transitions from vertex $k$ to vertex $\ell$ till the last switching instant $\tau_{\Ntsigma}$ before (and including) $t$. Employing the definition of holding times in \eqref{e:holdingtime},
    \begin{align*}
        \exp\biggl(-\tausum\biggr) &=
        \exp\biggl(-\sum_{i=0}^{\Ntsigma-1}\lambda_{\sigma(\tau_{i})}\holdt\biggr)\\ &= \exp\biggl(-\sum_{i=0}^{\Ntsigma-1}\biggl(\sum_{j\in \P}\indicator\lambda_{j}\holdt\biggr)\biggr).
    \end{align*}
    Separating out the asymptotically stable, marginally stable, and unstable systems in the family \eqref{e:family} into the subsets $\P_{AS}$, $\P_{MS}$, and $\P_{U}\subset\P$, respectively, we see that the argument of the exponential in the right-hand side above is equal to
    \[
        -\biggl(\astablesum+\mstablesum+\ustablesum\biggr).
    \]
    Recall that $\lambda_{j}$'s are positive, zero, and negative for the first, middle, and the last sums, respectively. Thus, the expression above equals
    \begin{equation}
    \label{e:proof4}
        \ustablesummag-\astablesummag.
    \end{equation}
    Substituting \eqref{e:proof3} and \eqref{e:proof4} into \eqref{e:proof2}, we arrive at
    \begin{equation}
	\label{e:key step}
        V_{\sigma(t)}(x(t)) \leq \exp\bigl(\psi(t)\bigr)V_{\sigma(0)}(x_{0}),
    \end{equation}
    where, for $t>0$, we define the function \(\psi\) as
    \begin{equation}
    \begin{split}
    \label{e:proof5}
        \psi(t) &\Let \Ntsigma\sum_{(k,l)\in E(\P)}\bar{\mu}_{kl}\frac{\ktol}{\Ntsigma}+\ustablesummag\\&\quad-\astablesummag-\lambda_{\sigma(\tau_{\Ntsigma})}(\dift).
    \end{split}
    \end{equation}

    Second of all, we see that the right-hand-side of \eqref{e:proof5}, for $t>0$, is equal to
    \begin{equation}
    \begin{split}
        h(t)\biggl(\nuh\sum_{(k,\ell)\in E(\P)}{\bar{\mu}_{k\ell}}\cdot\frac{\ktol}{\Ntsigma}+\ustablesummagfreq\\-\astablesummagfreq-\lambda_{\sigma(\tau_{\Ntsigma})}\frac{(\dift)}{h(t)}\biggr).
    \end{split}
    \end{equation}
    Recall that $\nu_{h}(t) \Let \nuh$ is the $h$-frequency of the switching signal $\sigma$ at $t>0$. Let us define two functions
    \[
        ]0,+\infty[\:\ni t\longmapsto f(t) \Let \sum_{(k,\ell)\in E(\P)}\bar{\mu}_{k\ell}\cdot\frac{\ktol}{\Ntsigma}
    \]
    and
    \begin{align*}
        ]0,+\infty[\:\ni t\longmapsto g(t) &\Let \ustablesummagfreq\\&-\astablesummagfreq-\lambda_{\sigma(\tau_{\Ntsigma})}\frac{(\dift)}{h(t)}.
    \end{align*}

    To verify GAS of the switched system \eqref{e:swsys}, (by Definition \ref{d:GAS},) we need to find conditions such that
    \begin{align}
    \label{e:proof7} 		& \lim_{t\rightarrow+\infty}\exp\Bigl(\bigl(\nu_{h}(t)f(t)+g(t)\bigr)h(t)\Bigr) = 0
		\intertext{and}
	\label{e:uniformity}	& \text{convergence is uniform for initial conditions \(\tilde x_0\) satisfying \(\norm{\tilde x_0} \le \norm{x_0}\)}.
    \end{align}
    Clearly, a sufficient condition for \eqref{e:proof7} is that
    \begin{equation}
    \label{e:proof8}
        \ls_{t\rightarrow+\infty}\bigl(\nu_h(t)f(t)+g(t)\bigr)<0,
    \end{equation}
    so our proof will be complete if we establish \eqref{e:proof8} and verify \eqref{e:uniformity} separately. The following steps are geared towards establishing \eqref{e:proof8}.

    Recall \cite[\S0.1]{LojaReal} that for two real valued functions $\vartheta$ and $\varphi$,
    \begin{align}
        \ls_{t\rightarrow+\infty}(\vartheta(t)+\varphi(t))\leq\ls_{t\rightarrow+\infty}\vartheta(t)+\ls_{t\rightarrow+\infty}\varphi(t)   \label{e:proof9}\\
        \li_{t\rightarrow+\infty}(\vartheta(t)+\varphi(t))\geq\li_{t\rightarrow+\infty}\vartheta(t)+\li_{t\rightarrow+\infty}\varphi(t)   \label{e:proof10}
    \end{align}
    provided the right-hand sides are not of the form $\mp\infty\pm\infty$.
    Similarly, if $\vartheta,\varphi\geq0$, then
    \begin{equation}
    \label{e:proof11}
        \ls_{t\rightarrow+\infty}(\vartheta(t)\cdot\varphi(t))\leq\ls_{t\rightarrow+\infty}\vartheta(t)\cdot\ls\varphi(t)
    \end{equation}
    provided the right-hand side is not of the form $0\cdot\pm\infty$. By definition of $\nu_{h}(t)$, $f(t)$ and $g(t)$, we see that properties \eqref{e:proof9}-\eqref{e:proof11} hold for the function $t\longmapsto\nu_{h}(t)f(t)+g(t)$.

    In view of \eqref{e:proof9} and \eqref{e:proof11},
    \begin{align*}
        \ls_{t\rightarrow+\infty}(\nu_{h}(t)f(t)+g(t))\leq\ls_{t\rightarrow+\infty}\nu_{h}(t)\ls_{t\rightarrow+\infty}f(t)+\ls_{t\rightarrow+\infty}g(t).
    \end{align*}
    By \eqref{e:proof9}, \eqref{e:rho}, and the definition of $f$, we get
    \begin{equation}
    \label{e:proof13}
        \ls_{t\rightarrow+\infty}f(t)\leq \sum_{(k,\ell)\in E(\P)}\bar{\mu}_{k\ell}\ls_{t\rightarrow+\infty}\rho_{k\ell}(t).    \nonumber
    \end{equation}
    The hypothesis \eqref{e:swfreq} guarantees that the term $-\lambda_{\sigma(\tau_{\Ntsigma})}\frac{(\dift)}{h(t)}$ in the expression of $g(t)$ is $o(h(t))$ as $t\to+\infty$. Indeed, $t-\tau_{\Ntsigma}\neq o(h(t))$ as $t\to+\infty$ implies that $\check{\nu}_{h} = 0$, which contradicts our assumption \eqref{e:swfreq}. In other words, $\frac{t-\tau_{\Ntsigma}}{h(t)}\to 0$ as $t\to+\infty$, i.e., $\frac{t-\tau_{\Ntsigma}}{h(t)} = o(1)$ as $t\to+\infty$. Hence, by definition \eqref{e:eta}, we have
    \begin{equation*}
        \ls_{t\rightarrow+\infty}g(t) = \ls_{t\rightarrow+\infty}\Bigl(\ustablesummagfreqnew-\astablesummagfreqnew\Bigr).
    \end{equation*}
    In view of \eqref{e:proof9}, the right-hand side is bounded above by
    \[
        \ls_{t\rightarrow+\infty}\ustablesummagfreqnew-\li_{t\rightarrow+\infty}\astablesummagfreqnew.
    \]
    Therefore,
    \begin{align*}
        \ls_{t\rightarrow+\infty}g(t)\leq\ustablesummagfreqls-\astablesummagfreqli.
    \end{align*}
    Thus for \eqref{e:proof7} to hold, it is sufficient that
    \begin{align}
    \label{e:proof15}
        \hat{\nu}_{h}\sum_{(k,\ell)\in E(\P)}\bar{\mu}_{k\ell}\cdot\ls_{t\rightarrow+\infty}{\rho}_{k\ell}(t)+\ustablesummagfreqls\nonumber\\-
        \astablesummagfreqli<0.
    \end{align}
    Employing the definitions in \eqref{e:rhohat}, \eqref{e:etahatcheck}, and noting that $\bar{\mu}_{k\ell} = \ln{\mu_{k\ell}}$, we see that \eqref{e:proof15} holds in view of \eqref{e:thmcondn}. Thus, \eqref{e:proof7} holds.
	
	It remains to verify \eqref{e:uniformity}. We provide a simple direct argument below; a more general result appears in \cite[Theorem 2]{Sontag08}. To this end, we get back to \eqref{e:key step}. With \([0, +\infty[\:\ni r\longmapsto \ol\alpha(r) \Let \lambda_{\max}\bigl(\sum_{i\in\P} P_i\bigr) r^2 \in[0, +\infty[\) and \([0, +\infty[\:\ni r\longmapsto \underline{\alpha}(r) \Let \min_{i\in\P}\lambda_{\min}(P_i) r^2\), we see that \(\underline{\alpha}(\norm{z}) \le V_{i}(z) \le \ol\alpha(\norm{z})\) for all \(i\in\P\) and \(z\in\R^d\). In conjunction with \eqref{e:key step}, we get
	\[
		\underline{\alpha}(\norm{x(t)}) \le V_{\sigma(t)}(x(t)) \le \ol\alpha(\norm{x_0}) \exp\bigl(\psi(t)\bigr)\quad\text{for all }t\ge 0,
	\]
	which implies, for \(c \Let \sqrt{\frac{\lambda_{\max}\bigl(\sum_{i\in\P} P_i\bigr)}{\min_{i\in\P}\lambda_{\min}(P_i)}}\),
	\begin{equation}
	\label{e:key uniformity step}
		\norm{x(t)} \le c \norm{x_0} \exp\bigl(\tfrac{1}{2}\psi(t)\bigr)\quad\text{for all }t\ge 0.
	\end{equation}
	Since the initial condition \(x_0\) is decoupled from \(\psi\) on the right-hand side of \eqref{e:key uniformity step}, if \(\norm{x(t)} < \eps\) for all \(t > T(\norm{x_0}, \eps)\) for some pre-assigned \(\eps > 0\), then the solution \((\tilde x(t))_{t\ge 0}\) to \eqref{e:swsys} corresponding to an initial condition \(\tilde x_0\) such that \(\norm{\tilde x_0} \le \norm{x_0}\) satisfies \(\norm{\tilde x(t)} < \eps\) for all \(t > T(\norm{x_0}, \eps)\). Uniform global asymptotic convergence follows at once, and therefore, so does the assertion of Theorem \ref{t:main}.
    \end{proof}

\end{document}